\documentclass{elsarticle}

\usepackage{lineno,hyperref}
\modulolinenumbers[5]

\bibliographystyle{elsarticle-num}

\usepackage{microtype} 
\usepackage [usenames] {xcolor}
\usepackage {placeins}

\graphicspath{{./figures/}}


\usepackage{todonotes}
\usepackage{amssymb}
\usepackage{amsmath}

\newcommand{\Z}{\mathbb{Z}}

\newcommand{\defn}[1]{\textbf{\textit{\boldmath #1}}}

\newtheorem{theorem}{Theorem}
\newtheorem{corollary}{Corollary}
\newproof{proof}{Proof}
\newtheorem{lemma}{Lemma}
\newtheorem{observation}{Observation}
\newtheorem{definition}{Definition}

\let\realendproof=\endproof
\def\endproof{\hspace*{\fill}\qed\realendproof}

\newcommand{\proofofref}{}
\newproof{zproofof}{Proof of \proofofref}

\hyphenation{rec-tan-gu-lar}

\def\comic#1#2#3{\parbox{#1}{\centering\includegraphics[width=#1]{#2}\\\vspace*{-.15cm}{\footnotesize #3}\vspace*{.25cm}}}
\def\comicII#1#2{\parbox{#1}{\centering\includegraphics[width=#1]{#2}}}




\title{Rectangular Spiral Galaxies are Still Hard\footnote{This research was performed in part at the 30th and the 33rd Bellairs Winter Workshop
on Computational Geometry. We thank all other participants for a fruitful
atmosphere.}\hspace*{.05cm}\footnote{A preliminary extended abstract appeared as a preprint in the Proceedings of the 37th European Workshop on Computational Geometry~\cite{dls-rsgsh-21}.}}

\author{Erik D. Demaine}
\address{Computer Science and Artificial Intelligence Laboratory,\\
Massachusetts Institute of Technology\\
  \texttt{edemaine@mit.edu}}

\author{Maarten L{\"o}ffler}
\address{Department of Information and Computing Sciences, 
Universiteit Utrecht\\
  \texttt{m.loffler@uu.nl}}

\author{Christiane Schmidt}
\address{Department of Science and Technology, Link\"oping University\\
  \texttt{christiane.schmidt@liu.se}}


\begin{document}

\begin{abstract}
Spiral Galaxies is a pencil-and-paper puzzle played on a grid of unit squares: given a set of points called \emph{centers}, the goal is to partition the grid into polyominoes such that each polyomino contains exactly one center and is $180^\circ$ rotationally symmetric about its center.
We show that this puzzle is NP-complete, ASP-complete, and \#P-complete even if
(a)~all solutions to the puzzle have rectangles for polyominoes; or
(b)~the polyominoes are required to be rectangles and
all solutions to the puzzle have just
$1\times1$, $1\times3$, and $3\times1$ rectangles.
The proof for the latter variant also implies NP/ASP/\#P-completeness of finding a noncrossing perfect matching in distance-$2$ grid graphs where edges connect vertices of Euclidean distance~$2$.
Moreover, we prove NP-completeness of the design problem of minimizing the number of centers such that there exists a set of galaxies that exactly cover a given shape.
\end{abstract}

\maketitle

\section{Introduction}\label{sec:introduction}

Spiral Galaxies is a pencil-and-paper puzzle published by Nikoli since 2001~\cite{n-sg-01} under the name ``Tentai Show''~\cite{gsp-18}. It is played on a grid of unit squares with given ``centers'' --- points located at grid vertices, cell centers, or edge midpoints. The goal is to decompose the grid into polyominoes called ``galaxies'' such that each galaxy contains exactly one center and is $180^\circ$ rotationally symmetric about its center; see Figure~\ref{fig:sg-ex}(a). The solution for a Spiral Galaxies puzzle may not be unique, but typically puzzles are designed to have a unique solution.

\begin{figure}[h]
\centering
\includegraphics[height=7em]{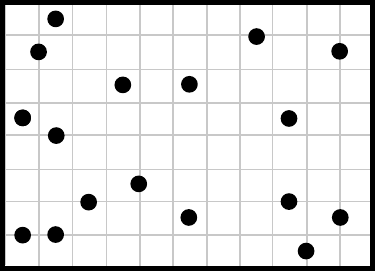}
\hfill
\includegraphics[height=7em]{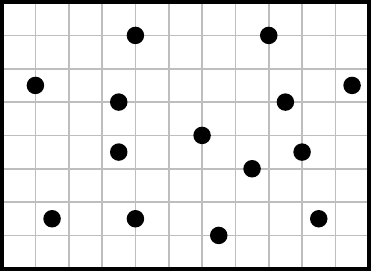}
\hfill
\includegraphics[height=7em]{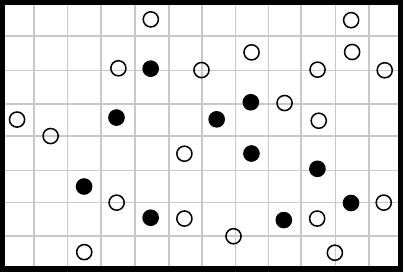}\\\bigskip
\includegraphics[height=7em]{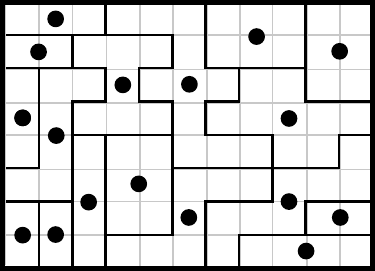}
\hfill
\includegraphics[height=7em]{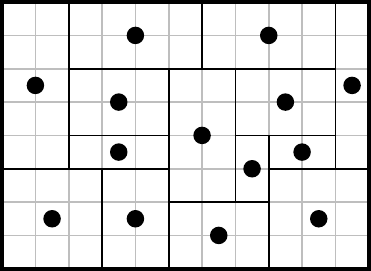}
\hfill
\includegraphics[height=7em]{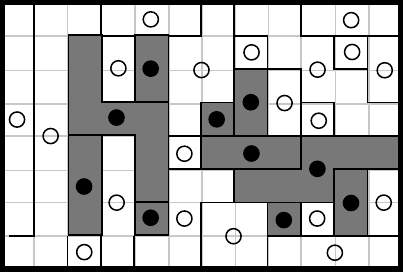}\\
\mbox{}\hfill(a)\hfill\mbox{}\hfill(b)\hfill\mbox{}\hfill(c)\hfill\mbox{}
  \caption
  { \small Several styles of Spiral Galaxies puzzles (top) and their solutions (bottom).
    (a) Classic Spiral Galaxies puzzle.
    (b) Rectangular Galaxies puzzle.
    (c) Spiral Galaxies puzzle with black and white centers such that the polyominoes containing the black centers in the solution yield a picture.
  }
  \label{fig:sg-ex}
\end{figure}

Friedman~\cite{f-sgpnp-02} proved Spiral Galaxies NP-complete for general polyomino galaxies.
His proof uses galaxies of area up to $22$.
Fertin et al.~\cite{fjk-tagsg-15} proved NP-completeness for galaxies of area at most $7$, while the puzzle becomes solvable in polynomial time if galaxies are restricted to have area at most $2$ or to be squares.
How much simpler can we make the galaxy shapes and still have Spiral Galaxies puzzles be NP-complete?

In this paper, we consider Spiral Galaxies with rectangular galaxies,
in two possible senses:
\begin{enumerate}
\item In \defn{Rectangular Galaxies}, the polyomino galaxies are
  required to be rectangles.  This is an additional restriction on the puzzle
  beyond $180^\circ$ rotational symmetry around the unique center
  contained in each rectangle.
\item Spiral Galaxies puzzles designed so that all their solutions
  (if any) have rectangular galaxies.  In other words, we have a
  \emph{promise} that all solutions are rectangular.
\end{enumerate}
Figure~\ref{fig:sg-ex}(b) shows an example of a Spiral Galaxies puzzle where
the unique solution is rectangular, so it is a valid input to either
Rectangular Galaxies or rectangular-promise Spiral Galaxies.

We prove that both of these puzzle variants are NP-complete and ASP-complete,
and that the corresponding problem of counting the number of solutions is
\#P-complete.
(Refer to~\cite[Chapter 28]{v-aa-10} and~\cite{y-ccfas-03,ys-ccfas-03} for definitions of \#P-completeness and ASP-completeness, respectively.)
In fact, hardness of rectangular-promise Spiral Galaxies implies
hardness of Rectangular Galaxies.
For Rectangular Galaxies,
we further show that the problem remains NP/ASP/\#P-complete
even when all solutions use only $1\times1$, $1\times3$, and $3\times1$
rectangles (a stronger promise),
and hence also if the puzzle is restricted to use only such rectangles.
This special case is in some sense the smallest interesting scenario:
if we restrict to only $1\times1$, $1\times2$, and $2\times1$ rectangles,
then each center determines the shape of its containing polyomino
(of area $1$ or $2$ depending on whether the center is at a cell center
or edge midpoint, respectively), making puzzles easy to solve.
Our proof for $1\times1$, $1\times3$, and $3\times1$ rectangles
also establishes NP-completeness of finding a noncrossing matching in ``distance-2 grid graphs'', whose vertices are a subset of the integer lattice and whose edges connect vertices at Euclidean distance exactly~$2$. 

In another variant of the puzzle, a subset of the centers is colored black, and the polyominoes that contain these centers reveals a picture once solved~\cite{gsp-18}; see Figure~\ref{fig:sg-ex}(c). Logic puzzles whose solutions yield pictures are a popular genre; one famous example is the nonogram~\cite{ueda96}. Such Spiral Galaxies puzzles are also the subject of a mathematical fonts~\cite{add-sgf-17}. Constructing an interesting puzzle such that its solution is a given target shape is nontrivial: while a valid puzzle trivially exists, by simply placing a center in every grid cell, the resulting puzzle is clearly not interesting.
We are thus also interested in finding the minimum number of centers such that there exist galaxies that exactly cover a given shape. We prove this puzzle design problem to be NP-complete. 



\subsection{Notation and Preliminaries}\label{sec:prob}

In Spiral Galaxies, a \defn{board} consists of an $m\times n$ grid of unit squares called \defn{cells}.
A \defn{puzzle} consists of a board and a set of \defn{centers} placed at cell centers, edge midpoints, or grid vertices. Either all centers have the same color, or a subset of the centers may be colored black.
A solution to the puzzle consists of a partition of the grid into polyominoes, called \defn{galaxies}, such that each galaxy contains a single center and is $180^\circ$ rotationally symmetric about its center.

Our reductions are from the \textsc{Planar Positive 1-in-3 SAT} problem, a well-known NP-complete, ASP-complete, and \#P-complete problem~\cite{dyer1986planar, mulzer08, hunt1998complexity}:
\begin{definition}
  An instance of the \textsc{Planar Positive 1-in-3 SAT} problem is a \defn{formula} $F = (\mathcal{C},\mathcal{V})$ consisting of a set $\mathcal{C} = \{C_1, C_2, \dots, C_k\}$ of $k$ \defn{clauses} over $\ell$ \defn{variables} $\mathcal{V} = \{x_1, x_2, \dots, x_\ell\}$.
  Each clause in $F$ is a set of at most three variables, and in particular contains variables only in their positive form (no negation).  
  The variable--clause incidence graph $G$ is planar. 
  A clause is satisfied if and only if it contains exactly one \textsc{true} variable, and the formula $F$ is satisfied if and only if all its clauses are satisfied.
  The goal is to find a Boolean assignment to the variables that satisfies the formula~$F$.
\end{definition}

\subsection{Organization}

The remainder of this article is organized as follows.
In Section~\ref{sec:sgr}, we prove that Spiral Galaxies is NP/ASP/\#P-complete when all solutions use rectangular galaxies.
In Section~\ref{sec:sgr13}, we show that Rectangular Galaxies is NP/ASP/\#P-complete when all solutions use rectangular galaxies of dimensions $1\times1$, $1\times3$, and $3\times1$.
In Section~\ref{sec:match}, we relate this problem to finding noncrossing perfect matchings in distance-2 grid graphs, proving NP/ASP/\#P-completeness of the latter.
In Section~\ref{sec:nrc}, we show that minimizing the number of centers when generating a puzzle with a given output shape is NP-hard.
In Section~\ref{sec:puzz}, we discuss the possibility of allowing multiple solutions in Spiral Galaxies puzzles, and we investigate a small puzzle related to a Spiral Galaxies font.
Finally, we conclude in Section~\ref {sec:conclusion}.

\section{Rectangular-Promise Spiral Galaxies}\label{sec:sgr}
In this section, we show that solving Spiral Galaxies puzzles promised to have solutions with only rectangular galaxies is NP-complete and ASP-complete, and that the corresponding problem of counting the number of solutions is \#P-complete. 

We give a reduction from \textsc{Planar Positive 1-in-3 SAT}. Given an instance $F$ of \textsc{Planar Positive 1-in-3 SAT} with incidence graph $G$, we show how to turn a rectilinear planar embedding of $G$ into a Spiral Galaxies puzzle $P$ such that a solution to $P$ yields a solution to $F$, thereby showing NP-completeness.
Furthermore, there will be a one-to-one correspondence between solutions of $P$ and solutions of $F$, showing \#P-completeness and ASP-completeness. 

\textbf{Overview and filler gadget.}
At a high level, our reduction consists of several gadgets:
``variable'' gadgets representing the variables of~$F$;
``wire'' gadgets to connect variables to clauses;
and ``clause'' gadgets to form the clauses of~$F$.
Refer ahead to Figure~\ref{fig:global} for a complete example of the reduction.

 \begin{figure}
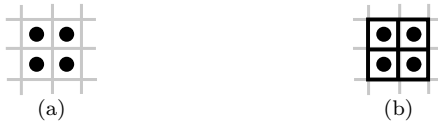

\centering
\hspace*{.25\textwidth}
\comic{.1\textwidth}{rect-face-a}{(a)}\hfill
\comic{.1\textwidth}{rect-face-b}{(b)}
\hspace*{.25\textwidth}
  \caption{\small (a) Filler gadget to fill the space in between all other gadgets. (b) Forced solution.}
  \label{fig:rect-face}
\end{figure}

For each region of the board that is not part of these gadgets,
we fill the region with a \defn{filler gadget},
which has a center in every cell of the region.
Figure~\ref{fig:rect-face} shows the filler gadget for a $2 \times 2$ region.
More generally, in every $2 \times 2$ square within the region,
we can argue locally that the four corresponding galaxies
must each consist of a single cell (the one containing the center):
the edges between cells must be galaxy boundaries
to separate the centers into separate galaxies,
and then $180^\circ$ rotational symmetry forces galaxy boundaries
around the $2 \times 2$ square.
As long as the region is the union of such $2 \times 2$ squares,
the filler gadget must consist entirely of single-cell galaxies,
without any interaction with the other gadgets in the construction.
On the other hand, if the region has a width-1 row or a
height-1 column (``thickness~1''), then the galaxy at the center of that cell
might includes cells surrounding the filler gadget.
We must therefore guarantee that every region between other gadgets
is the union of $2 \times 2$ squares (``thickness~2''),
so that the filler gadget has a forced solution of single-cell galaxies.

\begin {figure}
  \centering
  \includegraphics {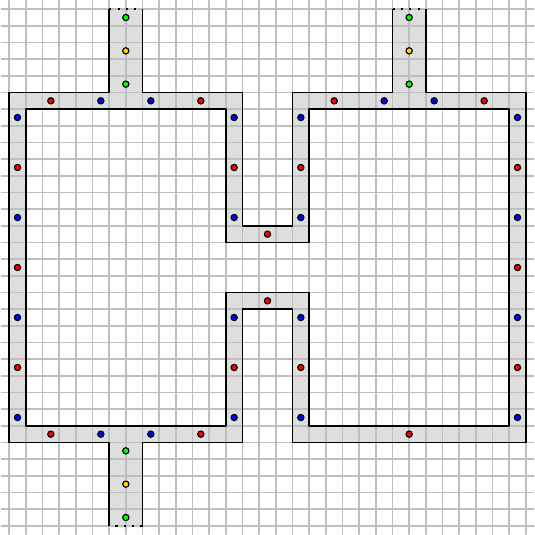}
  \caption 
  {Overall construction of a variable gadget as a sequence of bumps on the top and bottom, where each bump can have a single connection to an incident wire on that side (Figure~\ref{fig:variable-wire}). Bumps without such a connection, such as the one in the bottom right, use a single center.
  }
  \label{fig:variable-loop-fix}
\end {figure}

\textbf{Variable gadget and loop.}
Figure~\ref{fig:variable-loop-fix} shows the overall plan for a variable
gadget: a thickness-$1$ ``variable loop'' that follows a long horizontal
rectangle with regularly spaced bumps on the top and bottom sides,
where each bump has zero or one connection to an incident wire
(which has thickness~$2$).

In more detail, a \defn{variable loop} is a thickness-$1$ loop
built out of the subgadgets in Figures~\ref{fig:variable-straight}
and~\ref{fig:variable-corner}.
Every center is at a cell center, spaced modulo $3$
along the thickness-$1$ loop.
Each center in the middle of a straight piece can choose an $x \times 1$
rectangular galaxy for $x \in \{1,3,5\}$,
which then forces the next galaxy along the straight part to be $6-x \times 1$,
etc., as in Figure~\ref{fig:variable-straight}(b--d).
Each corner subgadget of Figure~\ref{fig:variable-corner}
forbids the center at distance $2$ from the corner
from having a $1 \times 1$ galaxy,
as then the galaxy centered at distance $1$ from the corner
fails to be $180^\circ$ rotationally symmetric;
see Figure~\ref{fig:variable-corner}(d).
Our ``bumpy rectangle'' design from Figure~\ref{fig:variable-loop-fix}
guarantees that every straight portion of a variable loop
is adjacent to at least one corner, and
we can further arrange that all corners have the same alignment.
These properties forbid one global pattern
(subfigure~(d) with blue $5 \times 1$ galaxies)
and leave two possible solution patterns:
all galaxies are $3 \times 1$ or $1 \times 3$ (subfigure~(b)),
and galaxies alternate between blue $1 \times 1$
and red $5 \times 1$/$1 \times 5$ (subfigure~(c)).
These two solutions to the variable gadget correspond to
setting the variable \textsc{false} or \textsc{true}, respectively.

\begin {figure}
  \centering
  \includegraphics {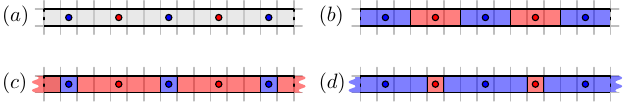}
  \caption 
  { (a) A straight piece of a variable gadget.
    (b--c) Two intended valid states.
    (d) Undesired but valid state.
  }
  \label{fig:variable-straight}
\medskip
  \centering
  \includegraphics {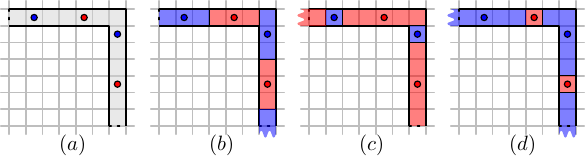}
  \caption 
  { (a) A corner of a variable gadget.
    (b--c) Two valid states.
    (d) The third state is no longer valid.
  }
  \label{fig:variable-corner}
\end {figure}

\textbf{Wire gadget.}
While thickness $1$ was useful to force the variable gadget to have exactly
two solutions, it seems difficult to copy that truth value onto multiple
thickness-$1$ wires.
We thus introduce thickness-$2$ \defn{wire gadgets}
as shown in Figure~\ref{fig:wire}.
In their simplest form, a wire gadget is a vertical width-$2$ rectangle
with a center at edge midpoints in the even rows modulo~$2$.
The two intended solutions use alternating $2 \times 1$ and $2 \times 3$
galaxies as shown in Figure~\ref{fig:wire}(b--c),
corresponding to \textsc{false} and \textsc{true} signals respectively,

Unfortunately, the basic wire also allows two nonrectangular solutions,
as shown in Figure~\ref{fig:wire}(d--e).
Each of these undesired solutions is prevented by a corresponding
\defn{shift gadget} shown in Figures~\ref{fig:shift-right}
and~\ref{fig:shift-left} respectively.
In either gadget, covering the corner pixels in the middle row
forces one side of the wire to use rectangles.
By guaranteeing that each wire shifts at least once to the right
and at least once to the left (including possibly one shift
canceling out the other), we avoid the undesired solutions,
forcing one of the two intended solutions in Figure~\ref{fig:wire}(b--c).


\begin {figure}
  \centering
  \includegraphics {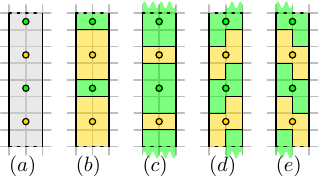}
  \caption 
  { (a) A wire gadget for connecting a variable to a clause.
    (b--c) Intended valid rectangular states.
    (d--e) Undesired but valid states.
  }
  \label{fig:wire}
\end {figure}
\medskip
\begin {figure}
  \centering
  \includegraphics {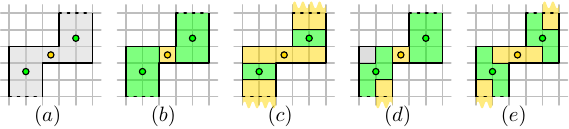}
  \caption 
  { (a) A shift to the right.
    (b---c, e) Three states are still valid.
    (d) The undesired wire state from Figure~\ref{fig:wire}(d) is no longer valid.
  }
  \label{fig:shift-right}
\medskip
  \centering
  \includegraphics {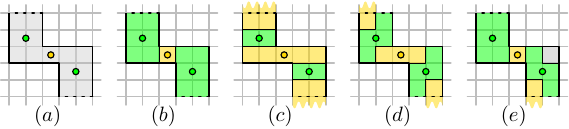}
  \caption 
  { (a) A shift to the left.
    (b---d) Three states are still valid.
    (e) The undesired wire state from Figure~\ref{fig:wire}(e) is no longer valid.
  }
  \label{fig:shift-left}
\end {figure}

Each shift gadget can shift the wire horizontally
by an arbitrary amount $\geq 3$,
as shown in Figure~\ref{fig:shift-var}.
The height-1 transition row makes it easy to argue that the same
solutions in Figures~\ref{fig:shift-right} and~\ref{fig:shift-left}
are forced.
Thus a wire can approximately follow any desired $y$-monotone path
that starts and ends with a vertical segment.
The $x$ coordinates of the start and end vertical segments
can be specified exactly;
the only approximations are that the shifts need $\Omega(1)$ room to navigate,
and we always use at least two shifts.

\begin {figure}
  \centering
  \includegraphics {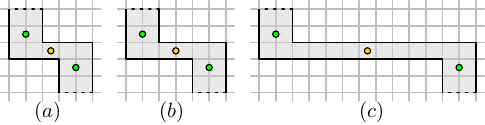}
  \caption 
  { Shift gadgets from Figures~\ref{fig:shift-right} and~\ref{fig:shift-left}
    can shift by an arbitrary horizontal amount $\geq 3$.
    (a, c) An even shift places the middle center at a cell center.
    (b) An odd shift places the middle center at an edge midpoint.
  }
  \label{fig:shift-var}
\end {figure}

\textbf{Connecting variables to wires.}
Figure~\ref{fig:variable-wire} shows how to communicate the value of a
variable loop into the signal of a wire.
The vertical thickness-$2$ wire attaches to two consecutive empty cells
of a horizontal segment of the variable loop.
If the wire's galaxies capture those two cells, then the two adjacent
centers of the variable loop must have $1 \times 1$ galaxies,
forcing alternation between $1 \times 1$ and $1 \times 5$ thereafter,
as in Figure~\ref{fig:variable-wire}(c).
Conversely, if the wire's galaxies do not capture those two cells,
then the two adjacent centers of the variable loop must capture them,
forcing the one other solution of the variable loop with
$1 \times 3$ rectangles, as in Figure~\ref{fig:variable-wire}(b).

\begin {figure}
  \centering
  \includegraphics {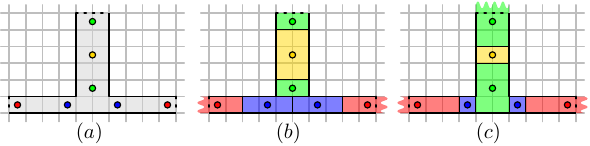}
  \caption 
  { (a) A place where a wire from the variable loop is connected.
    (b--c) Two compatible valid states.
  }
  \label{fig:variable-wire}
\end {figure}

Figure~\ref{fig:wires-need-U-turns} shows that the unintended solutions of
wire gadgets from Figure~\ref{fig:wire}(d--e) can propagate into the
variable gadget.  Thus connections to variable gadgets are insufficient to
fix these problems, which is why we needed to introduce the shift gadgets of
Figure~\ref{fig:shift-right} and~\ref{fig:shift-left}
to fix wire gadgets locally.

\begin {figure}
  \centering
  \includegraphics {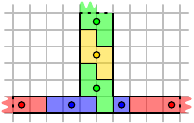}
  \caption 
  { Why we needed U-turns in wire gadgets: connections to variables
    are insufficient to force the intended rectangular solutions.
  }
  \label{fig:wires-need-U-turns}
\end {figure}

If we connected multiple wires to the same straight segment
of a variable gadget, then the straight segment between the two connections
would not be incident to any variable corners,
so it could be in the third state of Figure~\ref{fig:variable-straight}(d).
This is why we limit each bump of the variable loop to
having at most one connection to a wire,
so that every connection gadget from Figure~\ref{fig:variable-wire}
is surrounded on either side
by a corner gadget from Figure~\ref{fig:variable-corner}.

Each connection between variable and wire
removes one alternation from the variable gadget,
as indicated by the coloring in Figure~\ref{fig:variable-wire}.
If the total number of connections from a variable gadget is odd,
then we would have a parity mismatch around the variable loop.
Figure~\ref{fig:variable-loop-fix} illustrates a parity fix
by placing a single center in the middle of a bump of the variable gadget
that has no connection to a wire (similar to Figure~\ref{fig:shift-var}).
For simplicity, we can use this construction for every bump
with no connection to a wire.
Because the two adjacent corners turn in the same direction,
this center can be covered in exactly two ways:
with a rectangle that is either the full horizontal width
or $2$ smaller.

%
%
%
%
%

\begin {figure}
  \centering
  \includegraphics {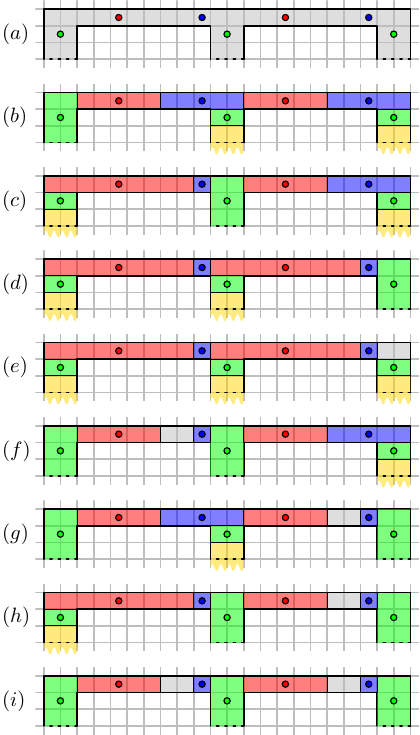}
  \caption 
  { The clause gadget connects three wire gadgets (green).
    (b--d) Valid solutions with exactly one true signal.
    (e) Impossibility with zero true signals.
    (f--h) Impossibility with two true signals.
    (i) Impossibility with three true signals.
  }
  \label{fig:clause}
\end {figure}

\textbf{Clause gadget.}
The last gadget is the clause gadget shown in Figure~\ref{fig:clause},
which implements a 1-in-3 constraint between three incident wire gadgets
via a $1 \times 22$ rectangle.
Although we draw all three wire gadgets as below the clause gadget,
each wire could in fact be either above or below the clause gadget.
Figure~\ref{fig:clause}(b--d) illustrates that the clause gadget has a
(unique) solution when exactly one of the wires carries a true signal.
Indeed, any true signal represented by a wire with a galaxy covering
two cells of the clause gadget locally forces the galaxies of the
one or two adjacent centers within the clause, which then propagates to force
the galaxies of the other centers within the clause.
This forced solution turns out to be inconsistent with
any of the other wires carrying a true signal, causing overlapping galaxies
or uncovered cells as shown in Figure~\ref{fig:clause}(e--i).
Therefore the clause gadget has a (unique) solution if and only if
exactly one of the incident wires carries a true signal.

\textbf{Putting pieces together.}
For the global construction, we construct an orthogonal planar embedding of $G$
where each vertex is represented by a horizontal line segment
with endpoints on the grid,
and each edge is represented by a polygonal line on the grid
that is $y$-monotone and starts and ends with a vertical segment.
Such an embedding can be obtained by
constructing a planar straight-line embedding of $G$
with vertices on an $O(n) \times O(n)$ grid \cite{schnyder90};
applying the transformation of \cite[Theorem~5]{biedl14}
into a $y$-monotone orthogonal drawing where vertices are horizontal segments;
and then staggering vertices with equal $y$ coordinates.
The resulting grid size is $O(n) \times O(n^2)$.

Figure~\ref{fig:global}(a) shows a particularly structured planar embedding
of $G$
where the variables all lie on a horizontal line,
every clause is either above or below this line,
and variable--clause connections do not cross the line or each other.
Then it is easy to draw each variable--clause connection with a $y$-monotone
path that starts and ends with a vertical segment.
Mulzer and Rote \cite{mulzer08} proved that \textsc{Planar Positive 1-in-3 SAT}
is NP-complete in this form.  Their reduction has also been observed to be
the end of a chain of parsimonious reductions from 3SAT
\cite{6.890-planar-1-in-3},
so it establishes ASP- and \#P-completeness too.
We describe the more general reduction in the previous paragraph
so as to not rely on this unpublished observation,
but use the simpler form for figures.

\begin {figure}
  \centering
  \includegraphics[width=\linewidth]{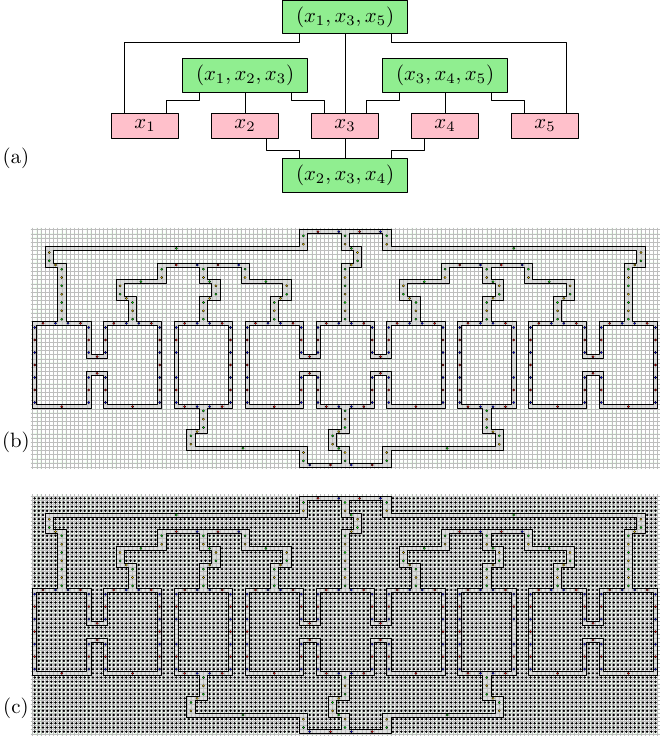}
  \caption 
  { (a) An embedded instance of Planar Positive 1-in-3 SAT, with all variables on a common horizontal line and all clauses either above or below the line.
    (b) The corresponding gadgets of Spiral Galaxies, excluding filler.
    (c) Full Spiral Galaxies puzzle with filler gadgets.
  }
  \label{fig:global}
\end {figure}

As illustrated in Figure~\ref{fig:global}(b),
we scale up the orthogonal planar embedding by a constant factor;
replace each variable or clause vertex with a variable gadget or clause gadget
respectively,
and replace each $y$-monotone edge with a wire gadget
(including at least one right shift and at least one left shift).
Finally, we place the filler gadget in the regions between other gadgets,
as shown in Figure~\ref{fig:global}(c).
Because each gadget has a unique solution corresponding to a Boolean
assignment, this reduction is parsimonious.

Finally we analyze alignment issues in this construction.
The shift gadgets of Figure~\ref{fig:shift-var} provide exact
horizontal positioning, so there are no constraints on horizontal
positioning of gadgets.
However, vertical positioning controls the parity of the \textsc{false}
and \textsc{true} values of a wire, as in the yellow and green coloring
of Figure~\ref{fig:wire}.
We place all variable loops to have the same vertical positioning modulo~$4$.
Thus we obtain consistent heights modulo $4$
for the row of a wire between a \textsc{false} (yellow) and \textsc{true}
(yellow) center that needs to be aligned with a clause:
for wires extending upward, this is the row above a \textsc{true} center, while
for wires extending downward, this is the row below a \textsc{true} center.
Observe in Figure~\ref{fig:variable-loop-fix} that these rows have the
same height modulo $4$ within a single variable loop,
and that wire gadgets of Figure~\ref{fig:wire} and shift gadgets of
Figures~\ref{fig:shift-right}--\ref{fig:shift-var}
preserve this height modulo~$4$.
We can then place all clause gadgets at such $y$ coordinates.

A solution to an $m\times n$ Spiral Galaxies puzzle can be verified in polynomial time. We conclude: 

\begin{theorem}
	Solving a Spiral Galaxies puzzle whose solutions have only rectangular galaxies is NP-complete and ASP-complete, and counting the number of solutions is \#P-complete.
\end{theorem}

\section{Rectangular Galaxies with $1\times1$, $1\times3$, and $3\times1$ Rectangles}\label{sec:sgr13}
In this section, we show that solving 
Rectangular Galaxies puzzles is NP-complete and ASP-complete, even with the promise that every solution uses only $1\times 1$, $1\times 3$, and $3\times 1$ galaxies, and that the corresponding problem of counting the number of solutions is \#P-complete. 
Again we give a reduction from \textsc{Planar Positive 1-in-3 SAT}.
Given an instance $F$ of \textsc{Planar Positive 1-in-3 SAT} with incidence graph $G$, we show how to turn a rectilinear planar embedding of $G$ into a Rectangular Galaxies puzzle $P$ such that a solution to $P$ yields a solution to $F$, thereby showing NP-completeness. Furthermore, there will be a one-to-one correspondence between solutions of $P$ and solutions of $F$, showing \#P-completeness and ASP-completeness. 

In this section, all galaxies will be $1\times 1$, $1\times 3$, or $3\times 1$.
To ease the description, our figures draw these galaxies as edges of length 2 ($1\times 3$ and $3\times 1$) or nonexisting edges ($1\times 1$).
Instead of drawing the galaxy centers (dots), we draw small boxes (squares) that reflect the endpoints of these edges.
Boxes at distance 2 can be connected by a rectilinear edge, and centers will be located at the middle of each potential edge; see Figure~\ref{fig:3-1-disks-and-centers}.
Hence, any $1\times 3$ or $3\times 1$ galaxy will cover both boxes, denoted by an edge between these two boxes; any $1\times 1$ galaxy will not extend over the boxes, and is shown by a nonexisting edge between the boxes.

\begin{figure}
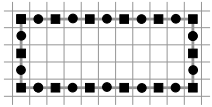

\centering
\comicII{.226\textwidth}{3-1-disks-centers}
  \caption{\small We place centers (dots) on the middle point of potential edges (light gray) between any pair of boxes (squares). In the remainder of this section, we show only the boxes, not the centers (dots). }
  \label{fig:3-1-disks-and-centers}
\end{figure}


\textbf{Overview and filler gadget.}
At a high level, our reduction consists of two main gadgets:
``variable'' gadgets representing the variables of~$F$;
and ``clause'' gadgets to form the clauses of~$F$.

As in Section~\ref{sec:sgr}, we fill any region not covered by these gadgets
with a \defn{filler gadget}, which places a center at every cell of the region.
Figure~\ref{fig:rect-face} shows the filler gadget for a $2 \times 2$ region.
The centers of a filler gadget must be contained in $1 \times 1$ galaxies,
provided every uncovered region is the union of $2 \times 2$ squares.

\textbf{Variable gadget.}
Figure~\ref{fig:3-1-var} illustrates the \defn{variable gadget},
which is an alternating pattern of boxes.
While the figure shows the gadget in a simple rectangular form,
variables can also turn arbitrarily, as long as each box has
distance $2$ to exactly two other boxes (i.e., the variable does not
get too close to itself).

\begin{lemma}\label{var-1-3-only-2-sols}
Each variable gadget has exactly two possible solutions (shown in Figure~\ref{fig:3-1-var}(b) and (c)).
\end{lemma}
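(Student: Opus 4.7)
The plan is to reduce the question to finding a perfect matching of a simple even cycle, which admits exactly two matchings. First, I would use the surrounding face gadget to confine every galaxy centered inside the loop to the loop region. Since the centers inside the loop are placed only at midpoints of distance-2 potential edges between disks, and the only allowed shapes are $1\times 1$, $1\times 3$, and $3\times 1$, each such center hosts either a $1\times 1$ covering just its own square, or a $1\times 3$/$3\times 1$ that additionally covers the two disks at the endpoints of one of the potential edges through that midpoint.

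Next, because every disk is a unit square that must appear in exactly one galaxy, and no center of the loop sits on a disk, each disk must be covered by a $1\times 3$ or $3\times 1$ galaxy centered at an adjacent midpoint. Thus the selected $1\times 3$/$3\times 1$ galaxies form a perfect matching $M$ of the graph $H$ whose vertices are the loop disks and whose edges are the drawn distance-2 edges; the remaining centers then host $1\times 1$ galaxies, giving a bijection between solutions of the loop and perfect matchings of $H$.

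By construction $H$ is a simple cycle of even length: consecutive disks along the drawn loop are the only pairs at distance 2 (no chord arises from a bend or a neighboring strand), and the total number of disks is even because the loop closes up. A simple even cycle has exactly two perfect matchings, namely its two alternating edge sets, and these correspond to the \textsc{true} and \textsc{false} configurations in Figure~\ref{fig:3-1-var}(b) and~(c). The main point to verify carefully from the figure is the no-chord claim, i.e., that the spacing between different stretches of the loop keeps all non-consecutive disk pairs at distance greater than~2; once this is granted, the count of two solutions is immediate, and the bijection of the previous paragraph also yields the parsimonious correspondence needed for the subsequent \#P- and ASP-completeness arguments.
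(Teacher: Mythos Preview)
Your argument is correct and follows essentially the same logic as the paper's proof: confinement by the face gadgets forces each loop center to host a $1\times 1$ or a $1\times 3$/$3\times 1$ galaxy, every disk must be covered by exactly one long galaxy, and hence the long galaxies alternate around the cycle, leaving exactly two configurations. Your explicit reformulation as a perfect matching of an even cycle is a clean way to phrase the alternation step and has the pleasant side effect of foreshadowing the equivalence with non-crossing matchings in squared grid graphs exploited in Section~\ref{sec:match}; the paper's own proof argues the alternation directly without naming the matching abstraction, but the content is the same, and both versions leave the no-chord and even-length checks to inspection of the gadget geometry.
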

\begin{proof}
Because of the filler gadgets, no galaxy centered at a center
of the variable gadget can extend beyond the grid cells
covered by the potential edges of the variable gadget.
Hence, every rectangular galaxy must have a width or height of 1,
which we label as ``vertical'' or ``horizontal'' respectively.
Because the centers are placed with distance 2 and galaxies must be
$180^\circ$ rotationally symmetric about their center,
horizontal and vertical galaxies have a width and height respectively
in $\{1,3\}$.
Because galaxies may not overlap (we aim to decompose the grid),
$1\times 3$/$3\times 1$ galaxies must alternate with $1 \times 1$ galaxies.
There exist exactly two possible solutions that fulfill these conditions.
\end{proof}

Each of the two possible solutions corresponds to one truth assignment
for the variable, \textsc{true} and \textsc{false}.
We route each variable gadget to interact with every clause that contains
the variable.  Equivalently, we can imagine each variable gadget as having
a ``tentacle'' to visit every incident clause.

\begin{figure}
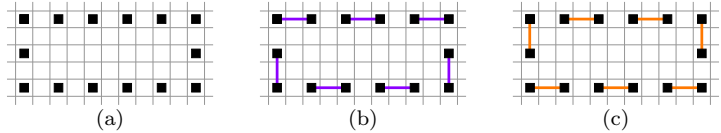

\centering
\hspace*{.1\textwidth}
\comic{.226\textwidth}{3-1-var}{(a)}\hfill
\comic{.226\textwidth}{3-1-var-a}{(b)}\hfill
\comic{.226\textwidth}{3-1-var-b}{(c)}
\hspace*{.1\textwidth}
  \caption{\small (a) Variable gadget with two possible states (b) and (c) corresponding to a truth assignment of \textsc{true} and \textsc{false} of the corresponding variable. }
  \label{fig:3-1-var}
\end{figure}

\begin{figure}
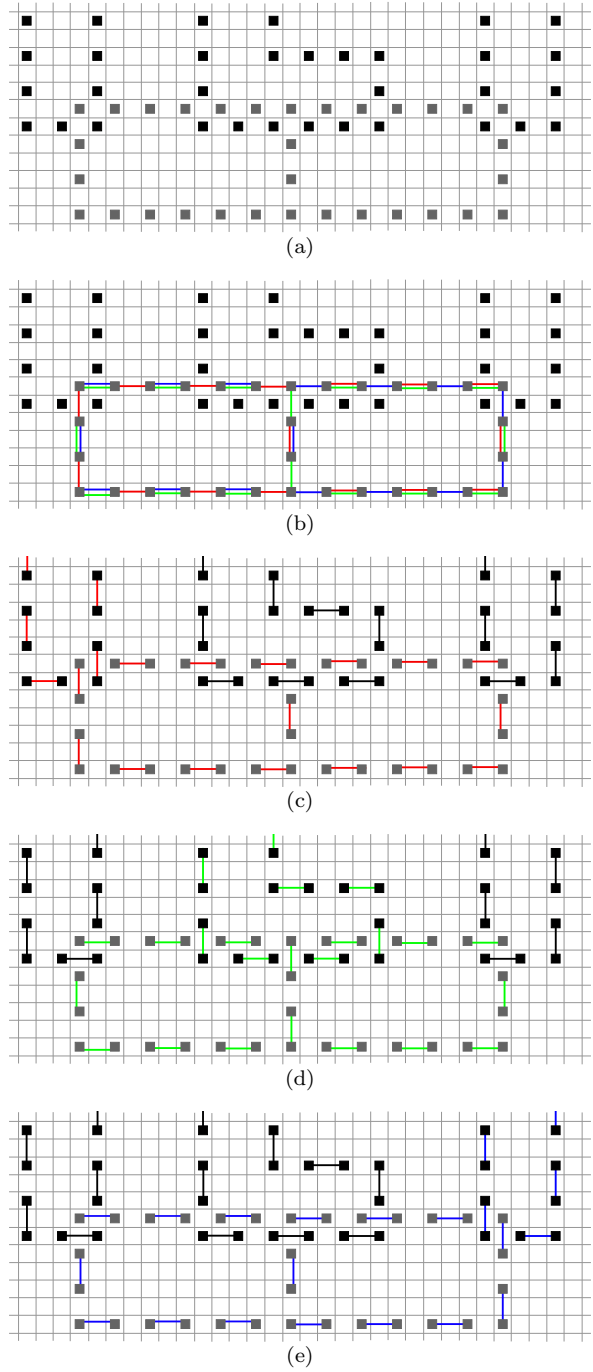

\centering
\comic{.638\textwidth}{3-1-clause-a}{(a)}\hfill
\comic{.638\textwidth}{3-1-clause-b-2-n}{(b)}\\
\comic{.638\textwidth}{3-1-clause-c-n}{(c)}\hfill
\comic{.638\textwidth}{3-1-clause-d-2-n}{(d)}\\
\comic{.638\textwidth}{3-1-clause-e-n}{(e)}\vspace*{-.2cm}
  \caption{\small (a) Clause gadget in gray, with the three incoming variable gadgets in black.  (b) The three possible states of the clause gadget in red, green, and blue. (c)--(e) Each state of the clause gadget with the corresponding assignments of the variable gadgets; the (only) true variable is shown in the same color as the clause edges.}
  \label{fig:3-1-clause}
\end{figure}

\textbf{Clause gadget.}
Figure~\ref{fig:3-1-clause} shows the clause gadget,
where the clause gadget itself is drawn in gray,
and the three incident variable gadgets are drawn in black.
There are three possible states of the clause gadget,
drawn in red, green, and blue in Figure~\ref{fig:3-1-clause}(b).
Each state forces exactly one of the variables' truth assignments
to be \textsc{true} and all the other variables to be \textsc{false},
as shown in Figure~\ref{fig:3-1-clause}(c)--(e).

\textbf{Putting pieces together.}
For the global construction, we start from an planar embedding of $G$
with edges routed orthogonally on an $O(n) \times O(n)$ grid
\cite{biedl98}, scaled up by a constant factor.
Then we locally replace each clause by a single clause gadget,
and replace each variable by a sufficiently large variable gadget
that extends (via ``tentacles'') to all clauses in which it appears.
All variable gadgets use one parity
(e.g., all boxes have even $x$ and even $y$ coordinates),
while all clause gadgets 
use the other parity (e.g., all boxes have odd $x$ and odd $y$ coordinates).
Hence, we can always place all gadgets and connect them together as desired.
Because each gadget has a unique solution corresponding to a Boolean
assignment, this reduction is parsimonious.

A solution to an $m\times n$ Rectangular Galaxies puzzle can 
be verified in polynomial time. We conclude:

\begin{theorem}\label{th:sgr13}
	Solving a Rectangular Galaxies puzzle whose solutions have only $1\times 1$, $1\times 3$, and $3\times 1$ galaxies is NP-complete and ASP-complete, and counting the number of solutions is \#P-complete.
\end{theorem}

\section{Noncrossing Matching in Distance-2 Grid Graphs}\label{sec:match}
In this section, we prove an equivalence between Spiral Galaxies with
$1 \times 1$, $1 \times 3$, and $3 \times 1$ galaxies and
``noncrossing perfect matchings'' in ``distance-2 grid graphs''.
First we define the quoted terms.
A \defn{distance-2 grid graph} $G = (V, E)$ has vertices $V \subset \Z \times \Z$ at a finite subset of the integer lattice, and edges $E = \{(v,w) \mid v, w \in V, \|v-w\| = 2\}$ between all pairs of vertices at Euclidean distance exactly $2$.
A \defn{noncrossing perfect matching} in a distance-2 grid graph $G = (V, E)$
is a partition of the vertices $V$ into $|V|/2$ pairs
such that the length-$2$ segments connecting paired vertices
do not intersect (including at endpoints).

Next we observe that distance-2 grid graphs can always be decomposed into two independent components.
Call a vertex $v = (x,y) \in V$ \defn{even} if $x+y$ is even,
and \defn{odd} if $x+y$ is odd.
Because edges connect vertices of distance exactly $2$,
which is even, vertices of different parity cannot be connected together.
Thus even/odd forms a partition of any distance-2 grid graph.
In the remainder, we restrict to \defn{even} distance-2 grid graphs
whose vertices are all even.

Call a Spiral Galaxies puzzle \defn{even} if all centers are at cell centers and all \emph{empty} cells are at even positions.
We claim that there is a one-to-one correspondence between noncrossing perfect matchings in even distance-2 grid graphs and solutions to even Spiral Galaxies puzzles restricted to $1\times 1$, $1\times 3$, and $3\times 1$ galaxies:

\begin {lemma} \label {lem:equiv}
  Given an even distance-2 grid graph, we can efficiently construct an even Spiral Galaxies puzzle restricted to $1\times 1$, $1\times 3$, and $3\times 1$ galaxies and correspond one-to-one with noncrossing perfect matchings in the distance-2 grid graph. Vice versa, we can reduce from a restricted even Spiral Galaxies puzzle to noncrossing perfect matching in an even distance-2 grid graph.
\end {lemma}

\begin {proof}
  Given an even distance-2 grid graph $G = (V, E)$,
  we construct an even Spiral Galaxies puzzle as follows.
  The board is the bounding box of the vertices $V$ enlarged by $1\over2$ on all sides, aligned so that every vertex of $V$ is placed at the center of a board cell.
  For every \emph{even} cell $(x, y)$ of the board, we place a galaxy center at its center if and only if $(x, y) \notin V$; that is, the vertices of $G$ correspond to empty cells of the board.
  For every \emph{odd} cell $(x, y)$ of the board, we always place a galaxy center at its center.
  
  Similarly, from a given even Spiral Galaxies puzzle, we can construct an even distance-2 grid graph by creating a vertex for every empty board cell. (The edges are implied by the vertices by the definition of distance-2 grid graph.)
  
  We claim that noncrossing perfect matchings of $G$ correspond one-to-one with solutions to the Spiral Galaxies instance restricted to only $1\times1$, $1\times3$, and $3\times1$ galaxies.
  Indeed, all empty cells in the puzzle must be captured by a galaxy, and can be captured only in pairs on opposite sides of a galaxy center (in an odd cell), which in $G$ corresponds to an edge connecting two vertices.
\end {proof}

Combining Lemma~\ref {lem:equiv} and Theorem~\ref{th:sgr13}
(or equivalently, interpreting the boxes in figures as graph vertices),
we obtain hardness for noncrossing perfect matching:
\begin{corollary}
Noncrossing perfect matching in distance-2 grid graphs
is NP-complete and ASP-complete, and counting the number of solutions is \#P-complete.
\end{corollary}

\section{Minimizing Centers in Spiral Galaxies for a Given Shape}\label{sec:nrc}
In this section, we are given a shape $\mathcal{S}$ on a Spiral Galaxies board, and we aim to find the minimum number of centers such that there exist galaxies with these centers that exactly cover the given shape $\mathcal{S}$. We show that this problem is NP-complete by another reduction from \textsc{Planar Positive 1-in-3 SAT}.

\begin{theorem}\label{th:nrc}
	It is NP-complete to minimize the number of centers on a Spiral Galaxies board such that galaxies with these centers exactly cover a given shape~$\mathcal{S}$.
\end{theorem}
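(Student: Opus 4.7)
The plan is to reduce from Planar 1-in-3 SAT, reusing the gadget toolkit developed in Sections~\ref{sec:sgr} and~\ref{sec:sgr13}. Given an instance $F$ with rectilinear embedding $G$ of its incidence graph, I would construct a single polyomino $\mathcal{S}$ together with an integer threshold $N$ such that $\mathcal{S}$ can be exactly covered by Spiral Galaxies using at most $N$ centers if and only if $F$ is satisfiable. Membership in NP is immediate: a proposed set of centers together with its induced partition forms a polynomial-size certificate, verifiable in polynomial time by checking coverage, disjointness, and $180^\circ$ symmetry of each galaxy.

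The conceptual difference from the earlier sections is that here the solver \emph{chooses} the centers, so every gadget must both (i) have a cheapest covering whose centers correspond to a legal variable or clause assignment, and (ii) strictly penalize any ``cheating'' covering. I plan to assemble $\mathcal{S}$ from variable loops, corridors (with bends and shifts), and three-way clause junctions, arranged according to the rectilinear embedding of $G$; the outer boundary of $\mathcal{S}$ simply follows the gadget outlines, so no face gadgets are required, since cells not in $\mathcal{S}$ are not to be covered. Each gadget $g$ will have an intrinsic minimum center count $b(g)$, and I will set $N = \sum_g b(g)$. A variable gadget is a thin closed loop whose two cheapest coverings encode \textsc{true} and \textsc{false}; a corridor gadget propagates the incident loop's pattern, with any deviation costing at least one extra center, in analogy with the failure mode depicted in Figure~\ref{fig:rect-var-corr}(d).

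The hard part will be the clause gadget. I need the cheapest covering of the clause region to be strictly smaller when exactly one of the three incoming corridors arrives in its \textsc{true} state than in any of the other seven truth patterns. The intended design places a small junction region whose cells are jointly coverable at the target budget only when precisely one corridor's \textsc{true}-pattern extends into it; every other incoming pattern leaves either a residual uncovered cell or a forced overlap, which in both cases can only be resolved by introducing an additional center. Establishing this strict inequality will require a careful case analysis over all eight assignments of the three literals, in the spirit of Lemmas~\ref{var-2-sols} and~\ref{var-1-3-only-2-sols}, but now measuring the number of centers rather than merely checking tileability.

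Once the gadgets are in place, the rectilinear embedding of $G$ can be scaled so that distinct gadgets never share cells, and the resulting shape $\mathcal{S}$ has size polynomial in $|F|$. Local minimality of each gadget then combines to give the chain: $\mathcal{S}$ is coverable with $N$ centers iff every gadget attains its local minimum iff the induced assignment satisfies $F$ in the 1-in-3 sense. Together with the NP membership argument, this yields Theorem~\ref{th:nrc}.
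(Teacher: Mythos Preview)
Your high-level outline (reduce from \textsc{Planar 1-in-3 SAT}, assign each gadget a budget $b(g)$, set $N=\sum_g b(g)$, show the budget is achievable iff $F$ is satisfiable) matches the paper's strategy, and your NP-membership argument is fine. However, the plan to ``reuse the gadget toolkit developed in Sections~\ref{sec:sgr} and~\ref{sec:sgr13}'' does not survive the change of problem, and this is where the proposal has a real gap.

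In Sections~\ref{sec:sgr} and~\ref{sec:sgr13} the centers are \emph{given}; the two-state behaviour of a variable loop comes entirely from the fixed placement of those centers (Lemmas~\ref{var-2-sols} and~\ref{var-1-3-only-2-sols}). Once the solver is free to place centers, a ``thin closed loop'' no longer has two cheapest coverings encoding \textsc{true}/\textsc{false}. Concretely, a rectangular annulus is itself $180^\circ$-symmetric about its centroid, so it is covered by a \emph{single} galaxy with one center; and even for less symmetric loop shapes there is no reason the optimum should be attained by exactly two coverings aligned with a truth value. The same issue infects the corridor and clause pieces: without prescribed centers, nothing in those shapes pins down where the centers must go, so your budget argument (``any deviation costs at least one extra center'') has no anchor.

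The paper addresses exactly this difficulty with a new ingredient you are missing: \emph{local center gadgets}. These are small regions whose boundary is decorated with asymmetric ``arms'' of several distinct shapes, so that (a) each such region provably requires at least one center inside it, and (b) because adjacent local center gadgets have different shapes, no single larger galaxy can swallow two of them. Interspersing these with neutral $5\times5$ \emph{block} rooms (which carry no center of their own and must be absorbed by a neighbouring local center gadget) is what produces a chain with exactly two optimal coverings; a separate \emph{fix gadget} forces each block to go entirely to one neighbour. The clause is then a single block shared by three chains. Your proposal would need a mechanism of this kind---something in the \emph{shape} that forces centers into prescribed locations---before the counting argument can get off the ground.
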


%
For our reduction, the input is a \textsc{Planar Positive 1-in-3 SAT} formula, and the output is a ``picture'' --- a subset of cells that should be ``colored black''. In the remainder of the section, we focus on the cells that are part of the picture; we will say that the cells which are not part of the picture are \defn{forbidden}.

First we will define some \defn{low-level gadgets}, which each have a constant size.
Then we will build \defn{high-level gadgets}, which consist of multiple low-level gadgets, and which will be used to encode the 1-in-3-SAT formula.

\subsection {Low-level gadgets}

\textbf{Local center gadgets.}
First we introduce \defn{local center gadgets}: thin constructions with unique shapes, which ensure that there must be at least one center in each of them; see Figure~\ref {fig:center-gadget}. 
The idea will now be to construct a shape that can be covered with exactly this set of centers if and only if the 1-in-3-SAT instance is satisfiable. 

\begin{figure}
\centering
\includegraphics [scale=0.75]{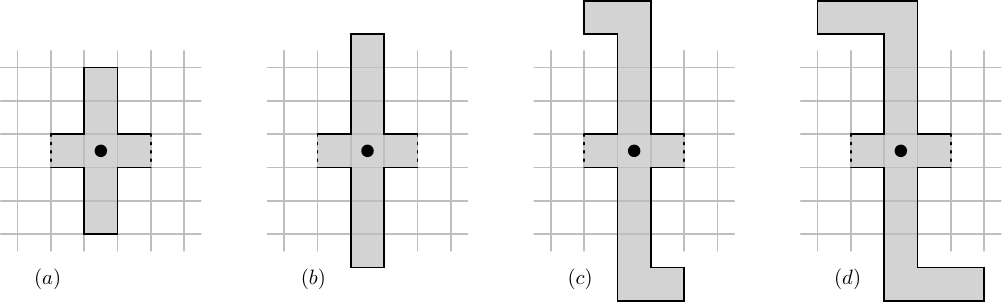}
  \caption{\small The local center gadget must have at least one center. (a)--(d) Different shapes ensure we cannot include them in larger galaxies.}
  \label{fig:center-gadget}
\end{figure}

Specifically, each local center gadget consists of a single cell with a \defn{desired galaxy center}. Its four direct neighbors are all empty, and its four diagonal neighbors are all forbidden. The cells left and right of the center will serve as connections to other low-level gadgets (we describe the gadgets as if they are vertically oriented; in the actual construction they may be rotated by $90^\circ$). The cells above and below the center are extended by symmetrically shaped paths of empty cells which are surrounded by forbidden cells.

We can immediately observe that, if there is \emph{not} a center at the location of the desired galaxy center in a local center gadget, then either the entire gadget will be part of a larger galaxy with a center outside the gadget, or there must be at least two centers placed inside the gadget. In the first case, because of the symmetry of the galaxies, there must be a second copy of the local center gadget that is also part of the same galaxy. For this reason, we use not one but several different versions of the gadget.
As we will see later, $O(1)$ different version will be sufficient.

 \begin{figure}
\centering
\includegraphics [scale=0.75]{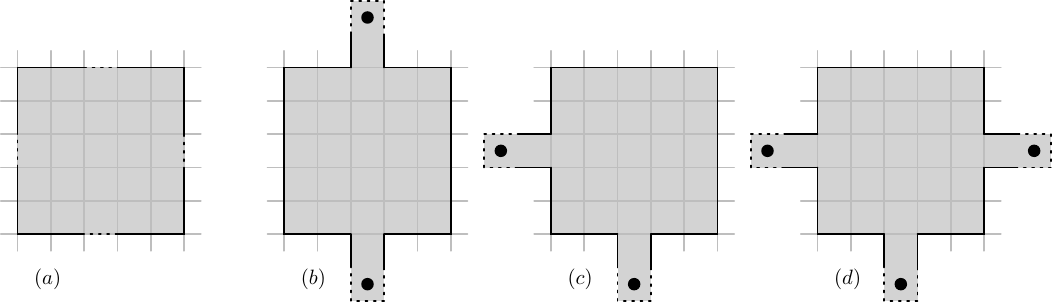}
  \caption{\small (a) The block gadget. (b) A straight block. (c) A corner block. (d) A clause block.}
  \label{fig:block-gadget}
\end{figure}

\textbf{Block gadgets.}
The second low-level gadget we use is the \defn{block gadget}.
A block gadget is simply a $5\times5$ room surrounded by forbidden cells, except for possibly in the centers of its four sides. 
The idea is that a block gadget will not contain galaxy centers in an optimal solution, but rather will be connected to local center gadgets and be part of their galaxies. 
Depending on which of the sides of a block gadget are connected to other gadgets, we distinguish \defn{straight blocks}, \defn{corner blocks}, and \defn{clause blocks}; see Figure~\ref {fig:block-gadget}.

\begin {observation} \label {obs:desire}
  Suppose we place local center gadgets and block gadgets in such a way that, for each block gadget $B$, all local center gadgets that $B$ is connected to have different shapes.
  Then we must place at least one galaxy center inside each local center gadget.
\end {observation}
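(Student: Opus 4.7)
The plan is to argue by contradiction, using the $180^\circ$-rotational symmetry that every galaxy must satisfy. Suppose some local center gadget $L$ attached to block $B$ contains no galaxy center at its desired position, and more strongly, contains no galaxy center at all. Then every cell of $L$ must belong to a galaxy whose center lies strictly outside $L$. I will derive a contradiction by producing another local center gadget attached to $B$ with the same shape as $L$.

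First, I would examine the topmost cell of $L$'s upper symmetric path. By construction, this path is surrounded by forbidden cells except at its bottom end, which meets $L$'s central cell. Hence the galaxy $G$ covering the topmost cell is forced, cell by cell down the path, to contain all of the upper path, then continue through the central cell of $L$, and finally exit $L$ through one of the two horizontal connection cells into $B$; in particular, the center $c_G$ of $G$ lies strictly outside $L$.

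Next, I would apply the $180^\circ$-rotational symmetry of $G$ about $c_G$. The image of $L$'s upper path under this rotation must also lie in $G$, and hence inside $\mathcal{S}$; moreover, since the only way out of $L$ is through $B$, the connectivity of $G$ forces $c_G$ (and the rotated image of $L$'s upper path) to sit on the far side of $B$. The only regions of $\mathcal{S}$ that can host such a thin polyomino with the required forbidden-cell border are the upper/lower paths of other local center gadgets attached to $B$. So the rotated image must land inside some $L' \neq L$ attached to $B$, and the rotation identifies the shape of $L$'s path with that of $L'$, contradicting the hypothesis that all local center gadgets attached to $B$ have pairwise distinct shapes.

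The main obstacle I anticipate is this last identification step: rigorously ruling out any other landing zone for the rotated image of $L$'s upper path. This requires the high-level construction of $\mathcal{S}$ to be rigid enough that (i) the only thin, forbidden-bordered regions in $\mathcal{S}$ are paths of local center gadgets, and (ii) the only way a galaxy can connect a local center gadget to the outside world is via a horizontal connection into its adjacent block. I would verify both properties explicitly when setting up the high-level gadgets in the subsequent subsections, so that the local, symmetry-based argument above applies globally to $\mathcal{S}$.
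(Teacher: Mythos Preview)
Your approach is essentially the same as the paper's: the paper states this as an observation and gives the justifying argument informally in the paragraph immediately preceding it, namely that if a local center gadget contained no center it would lie entirely inside a larger galaxy whose $180^\circ$ symmetry forces a second copy of the gadget's distinctive shape to appear in $\mathcal{S}$, contradicting the assumed shape-uniqueness among gadgets attached to the same block. Your write-up simply expands this same idea with more geometric detail, and the ``landing zone'' concern you flag is precisely the point the paper also leaves at an intuitive level.
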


\begin{figure}
\centering
\includegraphics [scale=0.75]{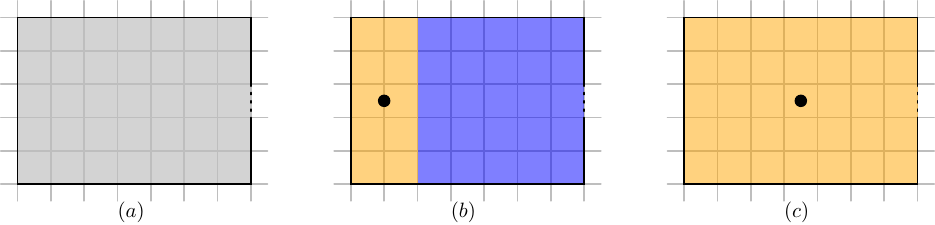}
  \caption{\small An end gadget has one center, whether or not a block is used by the galaxy next to it.}
  \label{fig:end-gadget}
\end{figure}

\textbf{End gadgets.}
Finally, our last low-level gadget is the \defn{end gadget}. 
An end gadget is similar to a bock gadget, but it is a $7\times5$ room and it has only one connection, at the center of one of its sides of length $5$; see Figure~\ref {fig:end-gadget}.

The idea is that, if we place an end gadget at the end of a chain that satisfies the conditions of Observation~\ref {obs:desire}, then each end gadget will require at least one galaxy center to be placed inside it, because the center in the local center gadget next to it has only a $5 \times 5$ room on the other side (unless two end gadgets are directly connected, which is not useful). Depending on where we place it, more or less of the room is available to be included in neighboring galaxies.

\begin {observation} \label {obs:end}
  If an end gadget is connected to a local center gadget, and on the other side of the local center gadget is not another end gadget, then we must place at least one galaxy center inside the end gadget.
\end {observation}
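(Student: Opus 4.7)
I would prove this by contradiction: suppose no galaxy center is placed inside the end gadget $E$, and derive a contradiction from the $180^\circ$ symmetry requirement.

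The first step is to argue that a \emph{single} galaxy $\Gamma$ must cover all 35 cells of $E$. Let $q$ denote the unique connection cell on $E$'s length-$5$ side. Since $E$ is surrounded by forbidden cells except at $q$, and since galaxies are connected polyominoes, any galaxy containing a cell of $E$ must also contain $q$. Because $q$ belongs to exactly one galaxy, this galaxy $\Gamma$ must in fact contain every cell of $E$. Let $p$ denote the center of $\Gamma$; by our hypothesis, $p \notin E$.

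Next I would exploit the $180^\circ$ symmetry of $\Gamma$ about $p$: for every cell $c \in E$, the mirror cell $2p-c$ lies in $\Gamma$ as well, and so must be a picture cell. Because $E$ is a $7\times 5$ rectangle and $p \notin E$, the mirror set $\{2p-c : c \in E\}$ is another $7\times 5$ rectangle, disjoint from $E$ and positioned on the opposite side of $p$. The proof then reduces to showing that the picture cannot contain a $7\times 5$ rectangular region at this location.

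The main obstacle is to rule out this $7\times 5$ region in the picture. I would argue that, among the low-level gadgets of Section~\ref{sec:nrc}, only the end gadget provides a $7\times 5$ block of picture cells: block gadgets are only $5\times 5$, local center gadgets are one cell wide in the connection direction and have forbidden diagonal neighbors, and distinct gadgets are separated by forbidden borders punctured only by narrow single-cell connections. Consequently, no $7\times 5$ rectangle of picture cells can be assembled outside of an end gadget. The connectivity of $\Gamma$ through $q$ forces $p$ to be reachable from $E$ via a picture-cell path through the adjacent local center gadget, confining $p$, and hence the mirror image of $E$, to its immediate vicinity; since by hypothesis the gadget on the other side of that local center gadget is not another end gadget, the mirror of $E$ must contain a forbidden cell, contradicting $2p-c \in \Gamma$ for all $c \in E$.
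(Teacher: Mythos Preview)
The paper states Observation~\ref{obs:end} without proof, so there is no argument in the paper to compare against; your proposal is already more detailed than what the authors provide.

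Your overall strategy is sound: if no center lies in $E$, then connectivity through the single exit $q$ forces all $35$ cells of $E$ into one galaxy $\Gamma$ with center $p\notin E$, and $180^\circ$ symmetry then demands a second $7\times 5$ rectangle of picture cells elsewhere. The convexity remark (that $p\notin E$ implies $E$ and its mirror are disjoint) is a nice clean step.

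There is, however, a genuine gap in your final paragraph. You assert that connectivity of $\Gamma$ through $q$ ``confin[es] $p$, and hence the mirror image of $E$, to its immediate vicinity,'' and then use only the hypothesis about the gadget \emph{immediately} on the other side of the local center gadget. But reachability through $q$ does not by itself bound how far away $p$ can be: $\Gamma$ could in principle thread through the adjacent local center gadget, across a block gadget, through another local center gadget, and into a more distant end gadget (or a split gadget, which is a $7\times 15$ room and certainly contains $7\times 5$ picture rectangles). In that scenario the mirror of $E$ would land well beyond the ``other side'' of the first local center gadget, and your hypothesis says nothing about what lies there.

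What actually rules this out in the paper's construction is the device behind Observation~\ref{obs:desire}: consecutive local center gadgets are given \emph{different} asymmetric arm shapes, so a long symmetric galaxy spanning several gadgets would have to map one local center gadget onto a differently shaped one, forcing a forbidden cell into $\Gamma$. Your argument needs either to invoke this explicitly, or to give a direct geometric reason (e.g., tracking the mirror images of the forbidden diagonal neighbours of the local center cell) why $p$ cannot lie beyond the adjacent local center gadget. Without one of these, the step from ``$p$ is reachable through $L$'' to ``the mirror of $E$ sits in the next gadget over'' is not justified.
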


\subsection {High-level gadgets}

With these low-level gadgets in place, we now turn to constructing high-level gadgets that will encode the variables and clauses of our 1-in-3-SAT instance.

\textbf{Fix gadgets.}
First, we define the \defn{fix gadget}: an alternating sequence of four local center gadgets and three block gadgets making a U-turn as in Figure~\ref {fig:fix-gadget}.
The purpose of this gadget is to ensure that, in an optimal solution, each $5\times5$ block will be completely included in a single galaxy, and not split among several galaxies.

\begin {lemma} \label {lem:fix}
In any chain of blocks that contains a fix gadget, an optimal solution requires that each block must completely belong to a single galaxy.
\end {lemma}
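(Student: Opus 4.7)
The plan is to combine Observation~\ref{obs:desire} with the $180^\circ$ rotational symmetry forced by the local center gadgets to show that, in any optimal solution, every $5\times5$ block inside the fix gadget is assigned in its entirety to a single galaxy; I will then propagate this ``all-or-nothing'' behaviour along the rest of the chain. First I would invoke Observation~\ref{obs:desire} to conclude that each of the four local center gadgets of the fix gadget must contain at least one galaxy center. Since additional centers only hurt the objective, in an optimal solution I expect each local center gadget to contain exactly one center, and a short case analysis on its thin attached vertical paths shows that this center has to lie at the designated desired position (any other placement would force a second center inside the same local center gadget to respect the symmetry of its path, contradicting optimality). Four galaxy centers are thereby pinned down inside the fix gadget.

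Next I would analyse how the galaxy at one such local center can reach into an adjacent block. By $180^\circ$-symmetry, every cell the galaxy claims inside one adjacent block must be mirrored, through the desired center, by a corresponding cell on the opposite side of the local center. Along the straight arms of the U this opposite side is simply the other neighbouring block, so whatever the galaxy absorbs on one side is forced on the other. At the two local centers at the corners of the U-turn, however, the cells on the opposite side are forbidden (they lie on the outer boundary of the gadget), so the corner galaxy cannot take any strictly non-empty, non-full subset of the turn block without either covering a forbidden cell or leaving a cell inside the block uncovered. The only symmetric options for a corner galaxy are therefore to claim the entire turn block or to claim none of it; since the block must still be covered by some galaxy and only the two corner local centers are available, the turn block is absorbed wholly by one of them.

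Walking the same symmetry argument back around the U, the two remaining blocks of the fix gadget are forced to be claimed all-or-nothing as well. Beyond the fix gadget, every further block in the chain inherits this behaviour by induction: once a block is fully claimed by one of its adjacent local centers, the opposite adjacent local center is already ``saturated'' on one side, so its $180^\circ$-symmetry forces it to claim its other adjacent block in full, and so on in both directions along the chain. The main obstacle I foresee is the careful case analysis at the U-turn, where one has to rule out every symmetric way of splitting the turn block between the two corner galaxies. I expect the cleanest way to dispatch this is to parameterize each corner galaxy by its extent into the turn block and exploit the fact that the forbidden-cell ring surrounding the block leaves only the extreme values $0$ and $5$ compatible with the required $180^\circ$-symmetry on the other side of the corresponding local center.
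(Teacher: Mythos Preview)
Your overall plan---pin down the centers using Observation~\ref{obs:desire}, use $180^\circ$ symmetry, and then propagate an all-or-nothing choice along the chain---is the right shape, and the propagation step at the end is fine. The gap is in the heart of the argument, the U-turn step.

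You write that ``at the two local centers at the corners of the U-turn \ldots\ the cells on the opposite side are forbidden.'' In the fix gadget this is not the case. The gadget is an alternating sequence $\mathrm{LCG}_1$--$A$--$\mathrm{LCG}_2$--$B$--$\mathrm{LCG}_3$--$C$--$\mathrm{LCG}_4$, where $A$ and $C$ are the corner \emph{blocks} and $B$ is the straight block; every one of the four local center gadgets sits between two $5\times5$ rooms (either two blocks of the gadget, or a block and the continuation of the chain). So the reflection of any subset of $A$ through the center of $\mathrm{LCG}_2$ lands entirely inside $B$, and the reflection of any subset of $A$ through $\mathrm{LCG}_1$ lands in the next block of the chain---no forbidden cell is hit. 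Consequently there is no local obstruction that forces the ``extent'' parameter you propose to take only the values $0$ or $5$; a priori a corner block can be split between its two neighbouring galaxies in many ways without either galaxy covering a forbidden cell.

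What actually makes the fix gadget work is a \emph{global} contradiction obtained by pushing a split all the way around the U. The paper assumes that $A$ is shared between the galaxy coming in from the outside of the U and the galaxy centered at $\mathrm{LCG}_2$; it then follows a specific pixel: the split of $A$ forces (via symmetry through $\mathrm{LCG}_2$) a particular cell of $B$ to belong to that same galaxy, and symmetry through $\mathrm{LCG}_3$ then forces the corresponding cell of $C$ into the same galaxy as well. Because $C$ is a corner block turning back out of the U, that single forced cell in $C$ is positioned so that it can only be covered if the galaxy swallows \emph{all} of $C$---which contradicts $C$ being reached by $\mathrm{LCG}_4$'s galaxy. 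In other words, the obstruction is not that one reflection hits a forbidden cell, but that the two corner blocks turn in opposite senses, so a non-trivial split of $A$ reflects through $B$ into an impossible configuration at $C$. Your write-up needs this three-block propagation (or an equivalent argument exploiting the opposite orientations of $A$ and $C$) rather than a local forbidden-cell check at a single corner.
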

\begin {proof}
First, by Observation~\ref {obs:desire}, an optimal solution has exactly one center in each local center gadgets, and none in the block gadgets.
This implies that if {\em any} block gadget in the chain does not completely belong to a single galaxy, then, by symmetry, {\em every} block gadget in the chain does not completely belong to a single galaxy.

Now, suppose we have a fix gadget consisting of blocks $A$, $B$, and $C$, where $A$ and $C$ are corner blocks but $B$ is a straight block, and suppose $A$ is split among multiple galaxies, say a red one at the top and a blue one at the right; refer to Figure~\ref {fig:fix-gadget}(b). Then, by symmetry of the blue galaxy, the center left pixel of $B$ must also be blue and the bottom center pixel of $B$ must also be red. Then, the top center pixel of $C$ must be red (a priori, this could be the same or a different red galaxy --- by connectivity it must be a new galaxy) and the center right pixel of $C$ must be blue (again, either the same or a different blue galaxy) by symmetry of the (second) red galaxy. But when both the center left and top center pixels of $C$ are red then $C$ must be completely red, a contradiction.
\end {proof}

 \begin{figure}
\centering
\includegraphics [scale=0.75]{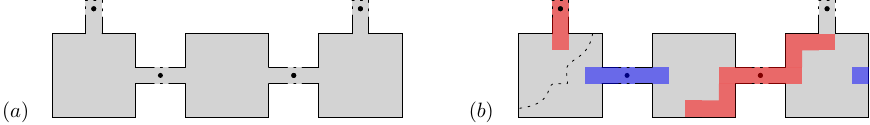}
  \caption{\small (a) The fix gadget. (b) If the left block is split among multiple galaxies: contradiction.}
  \label{fig:fix-gadget}
\end{figure}

 \begin{figure}
\centering
\includegraphics [scale=0.75]{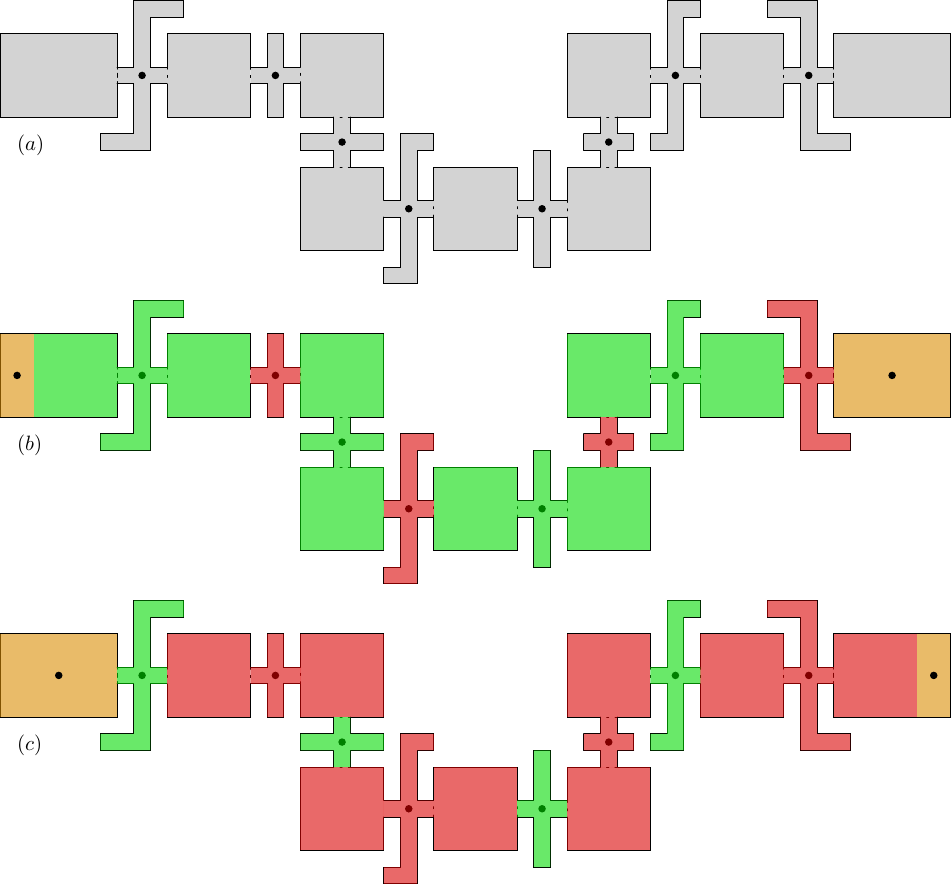}
  \caption{\small A variable chain and its two possible truth assignments.}
  \label{fig:variable-chain}
\end{figure}

\textbf{Variable gadgets.}
A \defn{variable chain} consists of alternating local center gadgets and block gadgets starting and ending at end gadgets; see Figure~\ref{fig:variable-chain} for a variable chain.
Each variable chain must include at least one fix gadget.
Then, by Lemma~\ref {lem:fix}, each block gadget on a variable chain must belong to exactly one of its two neighboring local center gadgets; this choice propagates throughout the entire chain and encodes the value \textsc{true} or \textsc{false} of one of the variables in our 1-in-3-SAT instance.
Specifically, we show:

\begin {lemma} \label {lem:variable}
Consider a variable chain which
\begin {itemize}
  \item has $k>1$ local center gadgets and $k-1$ block gadgets;
  \item contains at least one fix gadget; and
  \item has the property that every three consecutive local center gadgets have three different shapes.
\end {itemize}
Then there exists no set of $k+1$ or fewer galaxy centers that form a valid puzzle, and there are exactly two sets of $k+2$ galaxy centers that form a valid puzzle, both of which include all $k$ \emph{desired} galaxy centers.
\end {lemma}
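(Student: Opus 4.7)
The plan is to separate the two conclusions of the lemma: first, that no valid covering uses fewer than $k+2$ centers, and second, that exactly two configurations of $k+2$ centers work, both including all $k$ desired centers.

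For the lower bound, I would invoke Observation~\ref{obs:desire}. Because consecutive local center gadgets have different shapes, no local center gadget can be swallowed into a larger galaxy whose $180^\circ$-rotated image would have to be another local center gadget of the same shape. Hence each of the $k$ local center gadgets must contain at least one center. Since the chain-neighbour of each end gadget is a local center gadget (and not another end gadget), Observation~\ref{obs:end} forces one additional center in each of the two end gadgets, for a total of at least $k+2$, immediately ruling out $k+1$ or fewer.

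For the upper bound, I would assume a valid covering with exactly $k+2$ centers. By the counting above, each local center gadget contains exactly one center, which has to sit at its unique desired position, and each end gadget contains exactly one center. By Lemma~\ref{lem:fix}, each of the $k-1$ block gadgets is fully absorbed by one of its two neighbouring local center galaxies. The key propagation step is that, by $180^\circ$-symmetry at a local center $L_i$, its galaxy reaches into the right-adjacent block $B_i$ if and only if it reaches equally far into the left-adjacent block $B_{i-1}$ (or into the end gadget, when $i=1$ or $i=k$); combined with Lemma~\ref{lem:fix}, $L_i$ therefore either fully swallows both of its neighbours or neither of them. Since every block must be claimed by exactly one neighbour, the ``both/neither'' pattern is forced to alternate along the chain, so there are exactly two global phases, corresponding to the truth values \textsc{true} and \textsc{false}. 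For each phase I would then verify that the uncovered portion of each end gadget is a $180^\circ$-symmetric region centered at a unique grid position, admitting exactly one galaxy centered there.

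The main obstacle I anticipate is the end-gadget analysis: the $7\times 5$ geometry must be such that, whether or not the adjacent local center's galaxy reaches into the end gadget, the leftover region admits a unique legal placement of a single center, and neither choice leaves any other slack (such as shifting an interior center) that could break the forced alternation. Once this is checked, each phase corresponds bijectively to one valid set of $k+2$ centers containing the $k$ desired centers, and the rest is routine bookkeeping from Observations~\ref{obs:desire} and~\ref{obs:end} and Lemma~\ref{lem:fix}.
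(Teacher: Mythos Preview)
Your proposal is correct and follows the same skeleton as the paper: the lower bound via Observations~\ref{obs:desire} and~\ref{obs:end}, and the ``exactly two'' part via Lemma~\ref{lem:fix}. The paper's own proof is considerably terser than yours---it dispatches existence by pointing to Figure~\ref{fig:variable-chain} and dispatches uniqueness with the single sentence ``by Lemma~\ref{lem:fix}, we cannot have more than two possible shapes for each galaxy''---whereas you spell out the both/neither propagation and flag the end-gadget geometry as something to verify; your extra care is justified, but it is an elaboration of the same argument rather than a different one.
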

\begin {proof}
By Observation~\ref {obs:end} and because $k > 1$, we must place one galaxy center inside each end gadget.
By Observation~\ref {obs:desire} and because we vary the shapes of the local center gadgets, we must place at least one galaxy center at each of the $k$ local center gadgets.
Indeed, a galaxy centered in a local center gadget cannot include both adjacent local center gadgets, since they have different shapes, and a galaxy centered outside a local center gadget cannot include both incident local center gadgets, since they have different shapes.
So, we need at least $k+2$ centers.
This is also sufficient, as shown by the two solutions with $k+2$ centers in Figure~\ref {fig:variable-chain}.
Finally, by Lemma~\ref {lem:fix}, we cannot have more than two possible shapes for each galaxy.
\end {proof}

Note that within one variable chain, we can realize the conditions of Lemma~\ref {lem:variable} with only three different local center gadget shapes, which we alternate cyclicly.

 \begin{figure}
\centering
\includegraphics [scale=0.75]{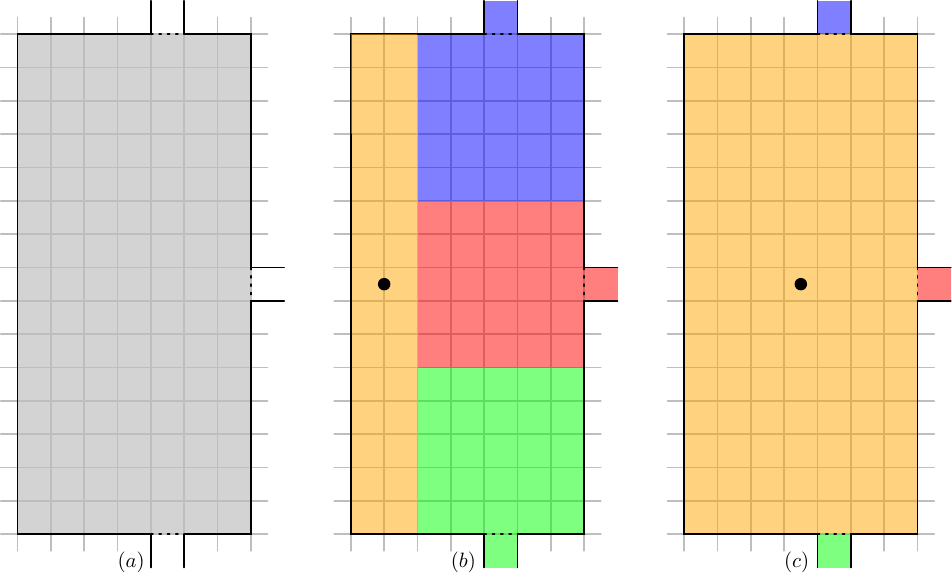}
  \caption{\small The split gadget is essentially three end gadgets, but can only be solved with a single center if either all three blocks are used or all three are not used.}
  \label{fig:split-gadget}
\end{figure}

\textbf{Split gadgets.}
In order to connect variables to multiple clauses, we need to be able to split them.
Rather than splitting chains in the usual sense, our \defn{split gadget} takes three independent variable chains and forces them to have the same state by combining three end gadgets into a single larger end gadget; see Figure~\ref {fig:split-gadget}. This effectively creates a ``split'' chain with three remaining ends that have the same state; by repeating this we may create as many copies of a variable as required.

The idea behind the split gadget is simple:
by combining three end gadgets into one large $7\times15$ room, we no longer know that we need three separate galaxy centers; in fact, applying Observation~\ref {obs:end} to the three adjacent local center gadgets, we potentially only require to place one galaxy center instead of three.
Observe that this is indeed achievable, but \emph{only} when all three chains are in the same state, because otherwise the remaining shape of the room is not symmetric by $180^\circ$ rotation (and thus cannot be covered by a single additional galaxy).

\begin{observation}
  A split gadget requires only one galaxy center if and only if either all three adjacent local center gadgets do use a $5\times5$ area of its room, or if none of them do.
\end{observation}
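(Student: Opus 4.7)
The plan is to recast the claim as a symmetry statement about the region of the split gadget's $7\times 15$ room that still needs to be tiled. Write $L$, $M$, $R$ for the three $5\times 5$ blocks at the bottom of the room adjacent to the three local center gadgets, and $T$ for the $2\times 15$ top strip. If the subset of blocks absorbed by the neighboring local center gadget galaxies is $S \subseteq \{L,M,R\}$, then the region $Q_S$ of the room still to be covered by galaxies centered inside the split gadget equals $T \cup (\{L,M,R\}\setminus S)$ when $S \ne \emptyset$, and equals the entire room when $S = \emptyset$. A single galaxy center suffices to cover $Q_S$ if and only if $Q_S$ is a single polyomino that is $180^\circ$-rotationally symmetric about a valid galaxy-center position.

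For the forward direction, I would exhibit the symmetry center in each of the two matched cases. When $S = \emptyset$, $Q_S$ is the entire $7\times 15$ room, which is centrally symmetric about its geometric center, so one galaxy placed there covers it. When $S = \{L, M, R\}$, $Q_S = T$ is a $2\times 15$ rectangle, symmetric about its own center, and again one galaxy suffices. In both cases the symmetry point is an edge midpoint, hence a legal galaxy-center position.

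For the converse, I would argue that no mixed $S$ admits a valid symmetry. The three chains attach at the bottom at symmetric columns, so reflection about the vertical midline of the room is a symmetry of the ambient configuration that swaps $L$ with $R$ while fixing $M$ and $T$. Any center of $180^\circ$ rotational symmetry of $Q_S$ must therefore lie on this midline, which immediately forces $L \in S$ if and only if $R \in S$. The only remaining mixed cases are $S = \{M\}$ and $S = \{L, R\}$. In each, $Q_S$ contains at least one full $5\times 5$ block in rows $1$ through $5$ but only the two-row strip $T$ in rows $6$ and $7$, so any $180^\circ$ rotation about a point on the midline would have to map a $5\times 5$ block into a congruent subregion of $Q_S$, which cannot fit inside a strip of height $2$. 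Hence $Q_S$ is not centrally symmetric.

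The main obstacle is ruling out ``unusual'' rotation centers that sit off the vertical midline or off natural grid positions. I would handle this by invoking the fact that the center of $180^\circ$ symmetry of a finite set of cells, if it exists, is uniquely the centroid of that set; a direct centroid computation in each mixed case yields a point about which $Q_S$ visibly fails to be symmetric (or is not a legal galaxy-center position), completing the proof.
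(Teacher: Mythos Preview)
Your approach is exactly the paper's: the paper states this as an Observation, justified only by the preceding sentence that ``the remaining shape of the room is not $180^\circ$-rotation-symmetric'' unless all three chains are in the same state, and you flesh out precisely that symmetry check on the residual region $Q_S$.

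One small wobble: the step ``reflection about the vertical midline is a symmetry of the ambient configuration \ldots\ hence any center of $180^\circ$ symmetry of $Q_S$ must lie on this midline, which forces $L\in S \iff R\in S$'' is circular as written --- the midline is a symmetry axis of $Q_S$ only once you already know $L\in S \iff R\in S$. Your final centroid argument (the unique $180^\circ$ center, if it exists, is the centroid; compute it for each of the six mixed $S$ and observe the shape is not symmetric about it) is the correct way to handle all mixed cases uniformly, and it makes the midline shortcut unnecessary.
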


\textbf{Clause gadgets.}
A clause of our 1-in-3-SAT formula is now simply encoded by a single clause block where three variable chains meet: it can be solved without using an additional center if and only if precisely one of the three chains needs to use the block.

In order to avoid that galaxies centered on clause blocks include local center gadgets, the three local center gadgets directly incident to the clause block must have distinct shapes.
To achieve this, we may start with an edge coloring of the variable--clause incidence graph and use three unique shapes for each color. Every planar graph of maximal degree $3$ admits an edge coloring with 4 colors, which results in 12 distinct shapes for our local center gadgets.
It is possible, but not necessary, to reduce the number of shapes further.

\subsection{Global construction}

With our high-level gadgets now also in place, we can describe the global construction of our reduction.
We start from an planar embedding of $G$
with edges routed orthogonally on an $O(n) \times O(n)$ grid
\cite{biedl98}, scaled up by a constant factor.
We locally replace each clause by a single clause block, each variable by a sufficient number of split gadgets connected by variable chains, and each edge by variable chains connecting variables to clauses.


\textbf{Distance considerations.}
We must make sure that all our gadgets are aligned with the grid, and can be connected to each other. Furthermore, we need to ensure that the number of block gadgets in each variable chain
has the same parity so that all clauses use positive forms of the variable.

We observe that we can achieve all of these requirements by simply adjusting the distance between adjacent block gadgets and allowing a slightly richer class of local center gadgets: instead of a width of $3$ we may give them a width of any desired integer $w$. If $w$ is even, we must place a galaxy center on an edge and not in the center of a cell, and we correspondingly also adjust the vertical arms of the gadget by giving them a width of $2$ rather than $1$.

This concludes the proof of Theorem~\ref{th:nrc}.
%

\section{Multiple Solutions and the Font Puzzle}\label{sec:puzz}
\begin{figure}
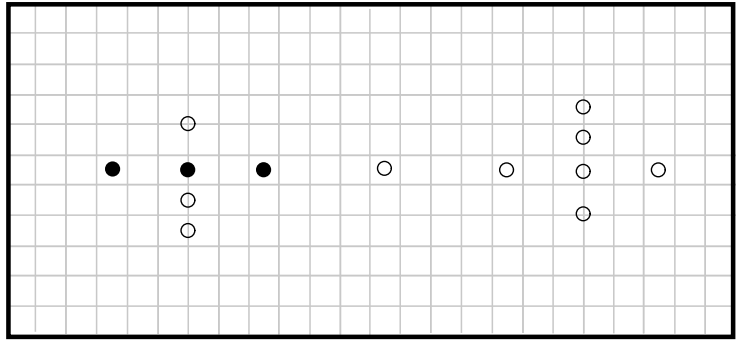

\centering
\comicII{.8\textwidth}{puzzle-letters}
  \caption{\small Puzzle that can be solved for letters A, B, H, R, S, Z (E for disconnected galaxies). 
}
  \label{fig:letter-puzzle}
\end{figure}

Standard Spiral Galaxies puzzles are designed to have a unique solution, with the intent that the puzzle solver finds that one solution.
In this section, we briefly explore the possibility of creating a single puzzle with \emph{multiple} solutions, such that each solution forms a different desired picture.
As a case study, we ask for a single puzzle such that, for every letter of the alphabet, there exists a solution to the puzzle that resembles that letter.
In Figure~\ref{fig:letter-puzzle}, we provide a puzzle which has a solution for the black letters A, B, H, P, R, S, Z (and when we allow for \emph{disconnected} galaxies also for the letter E). 
See Figure~\ref{fig:letter-puzzle-sol} for the solutions.

\begin{figure}
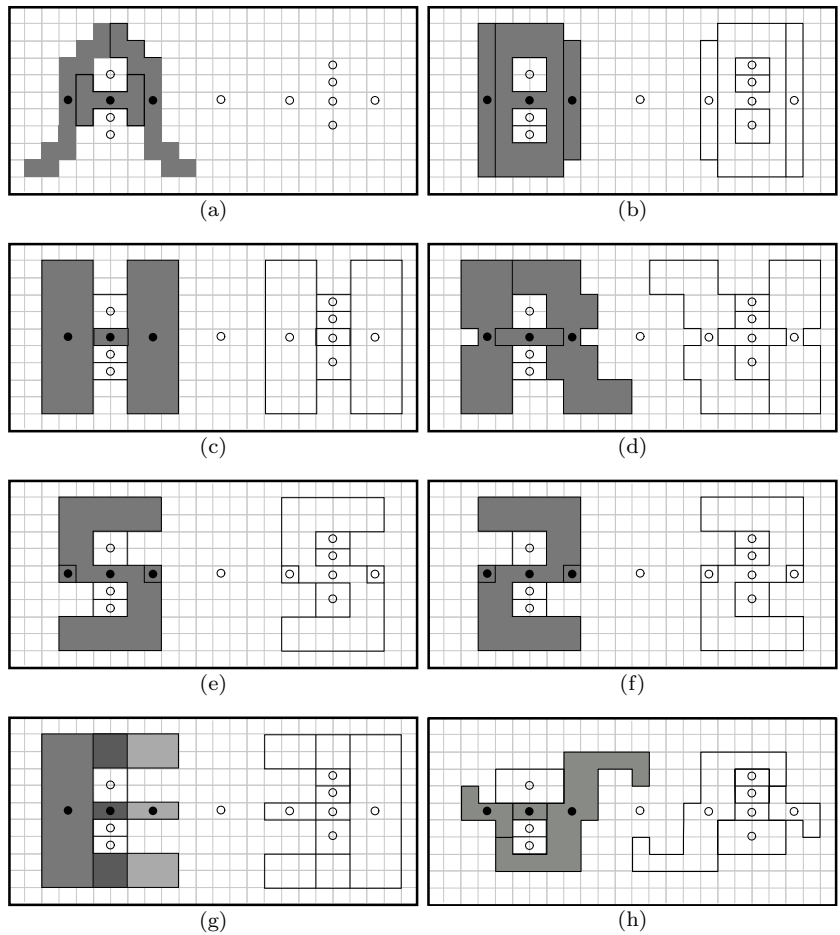

\centering
\comic{.45\textwidth}{puzzle-letters-A}{(a)}
\comic{.45\textwidth}{puzzle-letters-B}{(b)}
\comic{.45\textwidth}{puzzle-letters-H}{(c)}
\comic{.45\textwidth}{puzzle-letters-R}{(d)}
\comic{.45\textwidth}{puzzle-letters-S}{(e)}
\comic{.45\textwidth}{puzzle-letters-Z}{(f)}
\comic{.45\textwidth}{puzzle-letters-E}{(g)}
\comic{.45\textwidth}{puzzle-letters-}{(h)}
 \caption{\small (a--f) One puzzle with solutions drawing the letters A, B, H, R, S, and Z. (g) When we allow disconnected galaxies, we can also make the letter E. The three dark gray areas together form a single galaxy, and the three light gray areas as well. (h) The puzzle also has many ``nonsense'' solutions.}
  \label{fig:letter-puzzle-sol}
\end{figure}

While our construction has solutions for several letters, for most letters of the alphabet it does not, and it also has many additional solutions which do not resemble letters at all.
We leave as an open problem to construct puzzles which either solve for more letters, or for fewer nonletters, or both.
More generally, this raises the question of whether it is possible to create puzzles with multiple solutions in such a way that \emph{every} valid solution forms a meaningful picture.


\section{Conclusion and Discussion}\label{sec:conclusion}
We showed that both Rectangular Galaxies and rectangular-promise Spiral Galaxies are NP/ASP-complete even if the polyominoes are restricted to be rectangles of arbitrary size---and that counting the number of solutions is \#P-complete. Moreover, we proved that Rectangular Galaxies remains NP/ASP/\#P-complete even when only $1\times1$, $1\times3$, and $3\times1$ rectangles are possible. With the proof of the latter variant, we also showed that finding a noncrossing matching in distance-2 grid graphs is NP-complete. 
Finally, we proved NP-completeness of the design problem of finding the minimum number of centers such that there exist galaxies that exactly cover a given shape. The complexity of the latter problem when we aim for unique solutions remains open.

\section*{Acknowledgements}
We thank the anonymous referees for helpful comments.
C. S. was
partially supported by Jubileumsanslaget fr\r{a}n Knut och
Alice Wallenbergs Stiftelse.



\bibliography{lit}


\end{document}